% Quantum Channel Conditioning and Measurement Models
% by Stan Gudder

% setup 2.5.24
% begin 3.6.24
% finish 3.10.24
% correct 3.11.24
% 
%  3.12.24 arXivsubmit submit/
%  arXiv permanent identifier quant-ph 

% file "quantumCCMMamsart" from Stan Gudder
% prepared by Paula Gudder using LATeX
% on a Macintosh computer w/TexShop
% contact us at sgudder@du.edu

% NOTE: copy of file 4relatcombProblems 11.13.23.tex
% this file was created 10.30.23 to add {amsart} as per Teiko 

%Style section
%10.30.23 - remove article - 
% \documentclass[11pt,letterpaper]{article}
\documentclass[11pt,letterpaper]{amsart}

% 2.20.19 baselinestretch @ 1.1 added to accommodate subscripts
   
%
% 11.1.23 add enumerate
\usepackage{amsmath,amsfonts,amsthm,amssymb,stmaryrd,bm,cite,enumerate}
\topmargin -3pc % was commented out @ -1pc, 5.2.20 to -2pc - then 5.2.20 to -3pc

% Declaration section
\theoremstyle{plain}

\numberwithin{equation}{section}

\newtheorem{thm}{Theorem}[section]
\newtheorem{lem}[thm]{Lemma}

% 11.1.23 change
%% *****9.23.23 added with Teiko's help - will number consecutively throughout paper if don't add [thm]
\theoremstyle{definition}
\newtheorem{example}{Example}

\allowdisplaybreaks  % introduced 7.15.15

\newcommand{\trace}{\mathrm{tr\,}}
\newcommand{\rmin}{\mathrm{In\,}}
\newcommand{\rmob}{\mathrm{Ob\,}}
\newcommand{\ityes}{\textit{yes}}      
\newcommand{\itno}{\textit{no}}

\newcommand{\escript}{\mathcal{E}}
\newcommand{\hscript}{\mathcal{H}}
\newcommand{\iscript}{\mathcal{I}}
\newcommand{\jscript}{\mathcal{J}}
\newcommand{\lscript}{\mathcal{L}}
\newcommand{\oscript}{\mathcal{O}}
\newcommand{\sscript}{\mathcal{S}}
\newcommand{\hscriptbar}{\overline{\hscript}}
\newcommand{\iscriptbar}{\overline{\iscript}}
\newcommand{\jscriptbar}{\overline{\jscript}}
\newcommand{\hscripthat}{\widehat{\hscript}}
\newcommand{\iscripthat}{\widehat{\iscript}}
\newcommand{\jscripthat}{\widehat{\jscript}}

\newcommand{\brac}[1]{\left\{#1\right\}}
\newcommand{\paren}[1]{\left(#1\right)}
\newcommand{\sqbrac}[1]{\left[#1\right]}
 % \left [ and \right ) - in search will look for closing ] or (

\newcommand{\doubleab}[1]{\left|\left|#1\right|\right|}
\newcommand{\elbows}[1]{{\left\langle#1\right\rangle}}
\newcommand{\ket}[1]{{\left|#1\right>}}
\newcommand{\bra}[1]{{\left<#1\right|}}

\errorcontextlines=0

\begin{document}

\title{QUANTUM CHANNEL CONDITIONING AND MEASUREMENT MODELS}

\author{Stan Gudder}
\address{Department of Mathematics, 
University of Denver, Denver, Colorado 80208}
\email{sgudder@du.edu}
\date{}
\maketitle

\begin{abstract}
If $H_1$ and $H_2$ are finite-dimensional Hilbert spaces, a channel from $H_1$ to $H_2$ is a completely positive, linear map $\iscript$ that takes the set of states $\sscript (H_1)$ for $H_1$ to the set of states $\sscript (H_2)$ for $H_2$. Corresponding to $\iscript$ there is a unique dual map $\iscript ^*$ from the set of effects $\escript (H_2)$ for $H_2$ to the set of effects $\escript (H_1)$ for $H_1$. We call $\iscript ^*(b)$ the effect $b$ conditioned by $\iscript$ and the set $\iscript ^c=\iscript ^*\paren{\escript (H_2)}$ the conditioned set of $\iscript$. We point out that $\iscript ^c$ is a convex subeffect algebra of the effect algebra $\escript (H_1)$. We extend this definition to the conditioning
$\iscript ^*(B)$ for an observable $B$ on $H_2$ and say that an observable $A$ is in $\iscript ^c$ if $A=\iscript ^*(B)$ for some observable $B$. We show that $\iscript ^c$ is closed under post-processing and taking parts. We also define the conditioning of instruments by channels. These concepts are illustrated using examples of Holevo instruments and channels. We next discuss measurement models and their corresponding observables and instruments. We show that calculations can be simplified by employing Kraus and Holevo separable channels. Such channels allow one to separate the components of a tensor product.
\end{abstract}

\section{Basic Concepts and Definitions}  % Section 1
All the Hilbert spaces in this article are assumed to be finite-dimensional. For a Hilbert space $H$, we denote the set of linear operators on $H$ by $\lscript (H)$ and the set of self-adjoint operators on $H$ by $\lscript _S(H)$. The zero and identity operators are designated by $0$, $I$, respectively. When the Hilbert space needs to be specified, we write $I_H$ instead of $I$. An operator $a\in\lscript _S(H)$ that satisfies
$0\le a\le I$ is called an \textit{effect}. The set of effects on $H$ is denoted by $\escript (H)$ and is called an \textit{effect algebra}
\cite{bgl95,blm96,blpy16,gg01,gn01}. We consider an effect as a two-valued \ityes --\itno\  experiment and when the value \ityes\  is obtained, then $a$ \textit{occurs} \cite{hz12,kra83,nc00}. A function $f\colon\escript (H_1)\to\escript (H_2)$ between effect algebras is
\textit{additive} if $a,b\in\escript (H_1)$ satisfy $a+b\le I_{H_1}$ then $f(a)+f(b)\le I_{H_2}$ and $f(a+b)=f(a)+f(b)$. We say that $f$ is a
\textit{morphism} if $f$ is additive and $f(I_{H_1})=I_{H_2}$ \cite{hz12,nc00}.

If $\Omega _A$ is a finite set, then a set of effects $A=\brac{A_x\colon x\in\Omega _A}\subseteq\escript (H)$ that satisfies
$\sum\limits _{x\in\Omega _A}A_x=I$ is called an \textit{observable}. The set of observables on $H$ is denoted by $\rmob (H)$. We call
$\Omega _A$ the \textit{outcome space} for $A$ and when $\Delta\subseteq\Omega _A$, the map $A(\Delta )=\sum\limits _{x\in\Delta}A_x$ is an
\textit{effect-valued measure} (or \textit{positive operator-valued measure} (POVM)) \cite{bgl95,blm96,blpy16,dl70,hz12,nc00}. We think of $A_x$ as the effect that occurs when a measurement of $A$ results in the outcome $x$. A \textit{state} on $H$ is a positive operator $\rho\in\lscript _S(H)$ with trace $\trace (\rho )=1$ \cite{blpy16,hz12,nc00} and the set of states on $H$ is denoted by $\sscript (H)$. A state specifies the initial condition of a quantum system and determines the probabilities of measurement results in the following sense. If $\rho\in\sscript (H)$, $a\in\escript (H)$, then $0\le\trace (\rho a)\le 1$ and we call $P_\rho (a)=\trace (\rho a)$ the
$\rho$-\textit{probability that} $a$ \textit{occurs}. This is called the \textit{Born rule} \cite{hz12,nc00}. Of course, $P_\rho (0)=0$ and $P_\rho (I)=1$ for all
$\rho\in\sscript (H)$ so $0$ never occurs and $I$ always occurs. For $A\in\rmob (H)$, $\rho\in\sscript (H)$, the probability distribution on
$\Omega _A$ given by
\begin{equation*}
\Phi _\rho ^A(\Delta )=\trace\sqbrac{\rho A(\Delta )}
\end{equation*}
is called the $\rho$-\textit{distribution} of $A$ in the state $\rho$ \cite{bgl95,blm96,blpy16,dl70,hz12,nc00} A set of effects
$\brac{a_i\colon i=1,2,\ldots ,n}\subset\escript (H)$ \textit{coexist} (are \textit{compatible}) if there exits an observable $A\in\rmob (H)$ such that
$a_i=A(\Delta _i)$, $\Delta _i\subseteq\Omega _A$, $i=1,2,\ldots ,n$. It can be shown that if a set of effects mutually commute, then they coexist but the converse does not hold \cite{bkmpt22,hz12,mf23,nc00}. We call $A$ a \textit{joint observable} for
$\brac{a_i\colon i=1,2,\ldots ,n}$. Clearly, a finite set of effects coexist if and only if they are measured by a single observable.

An observable $B$ is \textit{part} of an obsservable $H$ if there exists a surjection $f\colon\Omega _A\to\Omega _B$ such that
\begin{equation*}
B_y=A\sqbrac{f^{-1}(y)}=\sum\brac{A_x\colon f(x)=y}
\end{equation*}
for all $y\in\Omega _B$ and we write $B=f(A)$ \cite{gud220,gud21}. We see that $B$ is obtained by piecing together effects from $A$. If $A\in\rmob (H)$, $\Omega _B$ is a finite set and $\lambda _{xy}\in\sqbrac{0,1}$, $x\in\Omega _A$, $y\in\Omega _B$ satisfy
$\sum\limits _{y\in\Omega _B}\lambda _{xy}=1$ for all $x\in\Omega _A$, then the observable
$B_y=\sum\limits _{x\in\Omega _A}\lambda _{xy}A_x$ is called a \textit{post-processing} of $A$ \cite{gud21,hz12}. We think of $B$ as the observable obtained by first measuring $A$ and then processing $A$ with a classical probability distribution $\lambda _{xy}$. An observable $A$ is a
\textit{bi-observable} if $\Omega _A=\Omega _1\times\Omega _2$ in which case we write 
\begin{equation*}
A=\brac{A_{xy}\colon (x,y)\in\Omega _1\times\Omega _2}
\end{equation*}
If $A\in\rmob (H)$ is a bi-observable, we define the \textit{marginals of} $A$ to be the observables $A^1,A^2\in\rmob (H)$ given by $A_x^1=\sum\limits _{y\in\Omega _2}A_{xy}$, $A_y^2=\sum\limits _{x\in\Omega _1}A_{xy}$ for all $x\in\Omega _1$, $y\in\Omega _2$. Notice that $A^1,A^2$ are parts of $A$ with $A^1=f_1(A)$, $A^2=f_2(A)$ where $f_1(x,y)=x$, $f_2(x,y)=y$ for all $x\in\Omega _1$, $y\in\Omega _2$. Two observables $A,B\in\rmob (H)$ \textit{coexist} (are \textit{compatible}) if there exists a
\textit{joint bi-observable} $C\in\rmob(H)$ such that $A=C^1$ and $B=C^2$. We see that $A,B$ coexist if there is a joint bi-observable that simultaneously measures them both. In particular, the marginals of a bi-observable coexist. It is not hard to show that the effects for two coexisting observables also coexist \cite{bkmpt22,hz12,mf23}

\begin{lem}    % Lemma 1.1
\label{lem11}
{\rm{(i)}}\enspace A post-processing of a post-processing of $A\in\rmob (H)$ is a post-processing of $A$.
{\rm{(ii)}}\enspace A part of a part of $A\in\rmob (H)$ is a part of $A$.
\end{lem}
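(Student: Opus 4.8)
The plan is to treat both parts as instances of a single principle: the structures that define the two operations (stochastic matrices for post-processing, surjections for parts) are closed under composition, and the composite is again of the required type.

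For (i), suppose $B$ is a post-processing of $A$ via coefficients $\lambda _{xy}\in\sqbrac{0,1}$ with $\sum _{y\in\Omega _B}\lambda _{xy}=1$, so that $B_y=\sum _{x\in\Omega _A}\lambda _{xy}A_x$, and suppose $C$ is a post-processing of $B$ via $\mu _{yz}\in\sqbrac{0,1}$ with $\sum _{z\in\Omega _C}\mu _{yz}=1$, so that $C_z=\sum _{y\in\Omega _B}\mu _{yz}B_y$. First I would substitute the expression for $B_y$ into that for $C_z$ and interchange the finite sums to obtain $C_z=\sum _{x\in\Omega _A}\nu _{xz}A_x$, where $\nu _{xz}=\sum _{y\in\Omega _B}\lambda _{xy}\mu _{yz}$. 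It then remains to verify that $\paren{\nu _{xz}}$ is a legitimate post-processing matrix. Nonnegativity is immediate, and for the normalization I would compute $\sum _{z\in\Omega _C}\nu _{xz}=\sum _{y\in\Omega _B}\lambda _{xy}\sum _{z\in\Omega _C}\mu _{yz}=\sum _{y\in\Omega _B}\lambda _{xy}=1$; since the $\nu _{xz}$ are nonnegative and sum to $1$ over $z$, each lies in $\sqbrac{0,1}$. This exhibits $C$ as a post-processing of $A$.

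For (ii), suppose $B=f(A)$ for a surjection $f\colon\Omega _A\to\Omega _B$ and $C=g(B)$ for a surjection $g\colon\Omega _B\to\Omega _C$. The plan is to use the composite $g\circ f\colon\Omega _A\to\Omega _C$, which is a surjection because a composition of surjections is surjective. The key observation is that preimages nest: $\paren{g\circ f}^{-1}(z)=f^{-1}\paren{g^{-1}(z)}$ is the disjoint union of the sets $f^{-1}(y)$ over $y\in g^{-1}(z)$. Using this I would expand $C_z=\sum\brac{B_y\colon g(y)=z}=\sum\brac{\sum\brac{A_x\colon f(x)=y}\colon g(y)=z}$ and collapse the double sum, each $x$ being counted exactly once, to get $C_z=\sum\brac{A_x\colon\paren{g\circ f}(x)=z}=A\sqbrac{\paren{g\circ f}^{-1}(z)}$. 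Hence $C=\paren{g\circ f}(A)$ is a part of $A$.

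Neither part presents a genuine obstacle; both reduce to the closure of stochastic matrices under multiplication and of surjections under composition. The only points requiring a moment's care are checking that the composite matrix $\paren{\nu _{xz}}$ in (i) is correctly normalized, and that the nested preimages in (ii) partition $\Omega _A$ so that no outcome of $A$ is double-counted when the two sums are merged.
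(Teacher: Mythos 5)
Your proposal is correct and follows essentially the same route as the paper: composing the stochastic matrices $\lambda_{xy}$ and $\mu_{yz}$ into $\nu_{xz}=\sum_{y}\lambda_{xy}\mu_{yz}$ for (i), and composing the surjections $f$ and $g$ for (ii). The only difference is that you make explicit two points the paper leaves implicit (that the composite coefficients lie in $\sqbrac{0,1}$, and that the nested preimages partition $\Omega_A$ so the double sum collapses without double-counting), which is a minor refinement rather than a different argument.
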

\begin{proof}
(i)\enspace Suppose $B_y=\sum\limits _{x\in\Omega _A}\lambda _{xy}A_x$ is a post-processing of $A$ and $\mu _{yz}\in\sqbrac{0,1}$ with
$\sum\limits _z\mu _{yz}=1$ for all $y$. Then for $\delta _{xz}=\sum\limits _{y\in\Omega _B}\mu _{yz}\lambda _{xy}$ we have the post-processing of $B$ given by
\begin{align*}
C_z&=\sum _{y\in\Omega _B}\mu _{yz}B_y=\sum _{y\in\Omega _B}\mu _{yz}\sum _{x\in\Omega _A}\lambda _{xy}A_x\\
&=\sum _{x\in\Omega _A}\paren{\sum _{y\in\Omega _B}\mu _{yz}\lambda _{xy}}A_x=\sum _x\delta _{xz}A_x
\end{align*}
Since
\begin{equation*}
\sum _{z\in\Omega _C}\delta _{xz}=\sum _{z\in\Omega _C}\sum _{y\in\Omega _B}\mu _{yz}\lambda _{xy}
=\sum _{y\in\Omega _B}\sum _{z\in\Omega _C}\mu _{yz}\lambda _{xy}=\sum _{y\in\Omega _B}\lambda _{xy}=1
\end{equation*}
We conclude that $C$ is a post-processing of $A$.
(ii)\enspace Let $B=f(A)$ be a part of $A$ and $C=g(B)$ be a part of $B$. Letting $h\colon\Omega _A\to\Omega _C$ be given by $h=g\circ f$ we conclude that $h$ is a surjection so $C=h(A)$ is a part of $A$.
\end{proof}

An \textit{operation} $\iscript$ from $H$ to $H_1$ is a trace non-increasing, completely positive, linear map
$\iscript\colon\lscript (H)\to\lscript (H_1)$ \cite{bgl95,blm96,blpy16,dl70}. We denote the set of operations from $H$ to $H_1$ by
$\oscript (H,H_1)$. If $\iscript\in\oscript (H,H_1)$ preserves the trace, we call $\iscript$ a \textit{channel} \cite{blpy16,gud21,hz12}. We think of a channel
$\iscript\in\oscript (H,H_1)$ as an interaction between a quantum system described by $H$ and a quantum system described by $H_1$. It can be shown that every $\iscript\in\oscript (H,H_1)$ has the form $\iscript (B)=\sum\limits _{i=1}^nK_iBK_i^*$ where $K_i\colon H\to H_1$ are linear operators satisfying $\sum\limits _{i=1}^nK_i^*K_i\le I_H$ \cite{gud21,kra83,nc00}. The operators $K_i$ are called \textit{Kraus operators} for $\iscript$. It is easy to show that $\iscript$ is a channel if and only if $\sum\limits _{i=1}^mK_i^*K_i=I_H$. We say that $\iscript\in\oscript (H,H_1)$ \textit{measures} an effect $a\in\escript (H)$ if $\trace\sqbrac{\iscript (\rho )}=\trace (\rho a)$ for all $\rho\in\sscript (H)$. We shall see that $\iscript$ measures a unique effect.

An \textit{instrument} $\iscript$ from $H$ to $H_1$ is a finite set of operations $\iscript =\brac{\iscript _x\colon x\in\Omega _\iscript}\subseteq\oscript (H,H_1)$ such that
$\iscriptbar=\sum\limits _{x\in\Omega _\iscript}\iscript _x$ is a channel \cite{gud220,gud22,gud123}. We call $\Omega _\iscript$ the \textit{outcome space} for $\iscript$ and denote the set of instruments from $H$ to $H_1$ by $\rmin (H,H_1)$. If $H=H_1$, we write $\iscript (H)$ for $\iscript (H,H)$. For $\Delta\subseteq\Omega _\iscript$ we write $\iscript (\Delta )=\sum\brac{\iscript _x\colon x\in\Delta}$ and call $\iscript$ an \textit{operation-valued measure} \cite{bgl95,blm96,blpy16,dl70,gud21}. If
$\iscript\in\rmin (H,H_1)$, $\rho\in\sscript (H)$, the $\rho$-\textit{distribution of} $\iscript$ is the probability measure on $\Omega _\iscript$ given by 
\begin{equation*}
\Phi _\rho ^\iscript (\Delta )=\trace\sqbrac{\iscript (\Delta )(\rho )}=\sum _{x\in\Delta}\trace\sqbrac{\iscript _x(\rho )}
\end{equation*}
An instrument $\iscript\in\rmin (H,H_1)$ \textit{measures} a unique observable $\iscripthat\in\rmob (H)$ given by
$\Omega _{\iscripthat}=\Omega _\iscript$ where $\trace (\rho\iscripthat _x)=\trace\sqbrac{\iscript _x(\rho )}$ for all $\rho\in\sscript (H)$. The instrument $\iscript$ provides more information than its measured observable $\iscripthat$. In fact, $\iscript$ provides more information than its measured observable $\iscripthat$. In fact, if $\iscript$ is measured with resulting outcome $x$ and effect $\iscript _x$ when the system is in state $\rho$, then the \textit{updated state} becomes
\begin{equation*}
\iscript _x(\rho )'=\frac{\iscript _x(\rho )}{\trace\sqbrac{\iscript _x(\rho )}}
\end{equation*}
whenever $\trace\sqbrac{\iscript _x(\rho )}\ne 0$. As with observables, we define a \textit{bi-instrument} $\iscript\in\rmin (H,H_1)$ to have
$\Omega _\iscript =\Omega _1\times\Omega _2$ and write $\iscript =\brac{\iscript _{xy}\colon (x,y)\in\Omega _1\times\Omega _2}$. If
$\iscript\in\rmin (H,H_1)$ is a \textit{bi-instrument} we define the \textit{marginals} $\iscript ^1,\iscript ^2\in\rmin (H,H_1)$ by
$\iscript _x^1=\sum\limits _{y\in\Omega _2}\iscript _{xy}$, $\iscript _y^2=\sum\limits _{x\in\Omega _1}\iscript _{xy}$. One can define coexistence of instruments \cite{bkmpt22,mf23}, but we shall not need that here.

\section{Conditioning}  % Section 2
In order to discuss conditioning relative to a channel $\iscript$ we need to develop the concept of a dual map $\iscript ^*$ \cite{gud22,gud123,gud24}.

\begin{thm}    % Theorem 2.1
\label{thm21}
For $\iscript\in\oscript (H_1,H_2)$ the following statements hold.
{\rm{(i)}}\enspace There exists a unique \textit{dual map} $\iscript ^*\colon\escript (H_2)\to\escript (H_1)$ satisfying
$\trace\sqbrac{\rho\iscript ^*(a)}=\trace\sqbrac{\iscript (\rho )a|}$ for all $\rho\in\sscript (H_1)$, $a\in\escript (H_2)$.
{\rm{(ii)}}\enspace $\iscript ^*$ is additive.
{\rm{(iii)}}\enspace $\iscript ^*$ is a morphism if and only is $\iscript$ is a channel.
\end{thm}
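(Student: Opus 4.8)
The plan is to construct the dual map explicitly from a Kraus representation and then read off all three claims from that formula. Since $\iscript\in\oscript(H_1,H_2)$, we may write $\iscript(\rho)=\sum_{i=1}^n K_i\rho K_i^*$ with $K_i\colon H_1\to H_2$ satisfying $\sum_{i=1}^n K_i^*K_i\le I_{H_1}$. For part (i) I would set
\begin{equation*}
\iscript^*(a)=\sum_{i=1}^n K_i^*aK_i
\end{equation*}
and check that this lands in $\escript(H_1)$ and satisfies the required identity. That $\iscript^*(a)$ is an effect is the one place where both operator bounds on $a$ enter: self-adjointness and positivity are clear from $\elbows{\psi,\iscript^*(a)\psi}=\sum_i\elbows{K_i\psi,aK_i\psi}\ge 0$, while the upper bound uses $a\le I_{H_2}$ and then the operation constraint, giving $\sum_i\elbows{K_i\psi,aK_i\psi}\le\sum_i\elbows{\psi,K_i^*K_i\psi}\le\elbows{\psi,\psi}$. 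The defining trace identity is then a single application of the cyclic law, $\trace\sqbrac{\iscript(\rho)a}=\sum_i\trace\sqbrac{\rho K_i^*aK_i}=\trace\sqbrac{\rho\iscript^*(a)}$.

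For uniqueness I would argue that the identity determines $\iscript^*$ outright, independently of the chosen Kraus operators. If two maps satisfy it, their difference $X=\iscript^*_1(a)-\iscript^*_2(a)$ is self-adjoint and obeys $\trace(\rho X)=0$ for every state $\rho$; taking $\rho=\ket{\psi}\bra{\psi}$ yields $\elbows{\psi,X\psi}=0$ for all $\psi$, and a self-adjoint operator with identically vanishing quadratic form is $0$.

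Part (ii) should be immediate: if $a+b\le I_{H_2}$ then $a+b\in\escript(H_2)$, linearity of $a\mapsto\sum_i K_i^*aK_i$ gives $\iscript^*(a+b)=\iscript^*(a)+\iscript^*(b)$, and the inequality $\iscript^*(a)+\iscript^*(b)\le I_{H_1}$ holds because this common value is already known to be an effect. For part (iii), additivity from (ii) reduces the morphism condition to $\iscript^*(I_{H_2})=I_{H_1}$; but $\iscript^*(I_{H_2})=\sum_i K_i^*K_i$, so this holds exactly when $\sum_i K_i^*K_i=I_{H_1}$, which by the criterion recalled in Section 1 is precisely the statement that $\iscript$ is a channel.

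The only genuinely delicate step is the uniqueness argument, where one must justify that vanishing of $\trace(\rho X)$ on all states forces $X=0$. The remaining care is bookkeeping: in part (i) the upper bound must invoke the operation inequality $\sum_i K_i^*K_i\le I_{H_1}$ rather than equality, and it is exactly the gap in this inequality that collapses to equality in part (iii) when $\iscript$ is a channel.
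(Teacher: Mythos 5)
Your proposal is correct, and its core coincides with the paper's proof of (i): construct $\iscript^*(a)=\sum_i K_i^*aK_i$ from a Kraus decomposition and verify $0\le\iscript^*(a)\le I_{H_1}$ by the same quadratic-form estimate, with uniqueness from the trace identity. In fact you are more careful than the paper on uniqueness: the paper jumps from $\trace\sqbrac{\rho\jscript(a)}=\trace\sqbrac{\rho\iscript^*(a)}$ for all $\rho\in\sscript(H_1)$ directly to $\jscript=\iscript^*$, while you supply the missing step that a self-adjoint operator whose trace against every state (equivalently, every rank-one $\ket{\psi}\bra{\psi}$) vanishes must be zero. Where you genuinely diverge is in (ii) and (iii): after establishing (i), the paper discards the Kraus formula and argues purely from the duality identity $\trace\sqbrac{\rho\iscript^*(a)}=\trace\sqbrac{\iscript(\rho)a}$ -- additivity by pushing the sum through the trace, and both directions of (iii) by evaluating at $I_{H_2}$ and using trace preservation -- whereas you read both claims off the explicit formula, reducing (iii) to the criterion recalled in Section 1 that $\iscript$ is a channel iff $\sum_i K_i^*K_i=I_{H_1}$. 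Both routes are sound; the paper's is representation-independent once (i) exists (it never again mentions the $K_i$, so it transfers to any abstract definition of the dual), while yours is shorter and makes visible that the slack in $\sum_i K_i^*K_i\le I_{H_1}$ is exactly what closes to equality for channels. The only dependency to flag is that your (iii) leans on the Section 1 channel criterion, which the paper states without proof; if you wanted full self-containment you would prove that equivalence by the same state-evaluation argument the paper uses in its version of (iii).
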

\begin{proof}
(i)\enspace $\iscript$ has a Kraus decomposition $\iscript (\rho )=\sum K_i\rho K_i^*$ for all $\rho\in\sscript (H_1)$, where 
$K_i\colon H_1\to H_2$ is a linear operator, $i=1,2,\ldots ,n$ with $\sum K_i^*K_i\le I_{H_1}$. Defining
$\iscript ^*\colon\escript (H_2)\to\lscript (H_1)$ by $\iscript ^*(a)=\sum K_i^*aK_i$ we have that
\begin{align*}
\elbows{\phi ,\iscript ^*(a)\phi}&=\elbows{\phi ,\sum K_i^*aK_i\phi}=\sum\elbows{K_i\phi ,aK_i\phi}\le\sum\elbows{K_i\phi ,K_i\phi}\\
   &=\elbows{\phi ,\sum K_i^*K_i\phi}\le\elbows{\phi ,I_{H_1}\phi}=\elbows{\phi ,\phi}
\end{align*}
for all $\phi\in H_1$. Since we also have $\elbows{\phi ,\iscript ^*(a)\phi}\ge 0$ for all $\pi\in H_1$, we conclude that $0\le\iscript ^*(a)\le I_{H_1}$. Hence, $\iscript ^*(a)\in\escript (H_1)$ so $\iscript ^*\colon\escript _2\to\escript _1$. We have that $\iscript ^*$ is unique because if
$\jscript\in\escript (H_2)\to\escript (H_1)$ satisfies $\trace\sqbrac{\rho\jscript (a)}=\trace\sqbrac{\iscript (\rho )a}$ for all $\rho\in\sscript (H_1)$,
$a\in\escript (H_2)$, then $\trace\sqbrac{\rho\jscript (a)}=\trace\sqbrac{\rho\iscript ^*(a)}$ for all $\rho\in\sscript (H_1)$ so $\jscript =\iscript ^*$.
(ii)\enspace To show that $\iscript ^*$ is additive, let $a,b\in\escript (H_2)$ with $a+b\in\escript (H_2)$. For all $\rho\in\sscript (H_1)$ we obtain
\begin{align*}
\trace\sqbrac{\rho\paren{\iscript ^*(a+b)}}&=\trace\sqbrac{\iscript (\rho )(a+b)}=\trace\sqbrac{\iscript (\rho )a}+\trace\sqbrac{\iscript (\rho )(b)}\\
   &=\trace\sqbrac{\rho\iscript ^*(a)}+\trace\sqbrac{\rho\iscript ^*(b)}=\trace\sqbrac{\rho\paren{\iscript ^*(a)+\iscript ^*(b)}}
\end{align*}
Hence, $\iscript ^*(a)+\iscript ^*(b)\in\escript (H_1)$ and $\iscript ^*(a+b)=\iscript ^*(a)+\iscript ^*(b)$. We conclude that $\iscript ^*$ is additive.
(iii)\enspace If $\iscript$ is a channel, then for all $\rho\in\sscript (H_1)$ we have
\begin{equation*}
\trace (\rho I_{H_1})=1=\trace\sqbrac{\iscript (\rho )I_{H_2}}=\trace\sqbrac{\rho\iscript ^*(I_{H_2})}
\end{equation*}
Hence, $\iscript ^*(I_{H_2})=I_1$ so $\iscript ^*$ is a morphism. Conversely, if $\iscript ^*$ is a morphism then for all $\rho\in\sscript (H_1)$ we obtain
\begin{equation*}
\trace\sqbrac{\iscript (\rho )}=\trace\sqbrac{\iscript (\rho )(I_{H_2})}=\trace\sqbrac{\rho\iscript ^*(I_{H_2})}=\trace\sqbrac{\rho I_{H_1}}
   =\trace (\rho )=1
\end{equation*}
Thus, $\iscript (\rho )\in\sscript _{H_2}$ for all $\rho\in\sscript (H_1)$ so $\iscript$ is a channel.
\end{proof}

If $\iscript\in\oscript (H_1,H_2)$ and $\jscript\in\oscript (H_2,H_3)$, the \textit{sequential product of} $\iscript$ \textit{then} $\jscript$ is the operation
$\iscript\circ\jscript\in\oscript(H_1,H_3)$ given by $\iscript\circ\jscript (\rho )=\jscript\paren{\iscript (\rho )}$. In a similar way, we define
$\jscript ^*\circ\iscript ^*\colon\escript (H_3)\to\escript (H_1)$ as $(\jscript ^*\circ\iscript ^*)(b)=\iscript ^*\paren{\jscript ^*(b)}$. Since
\begin{align*}
\trace\sqbrac{\rho (\iscript\circ\jscript )^*(b)}&=\trace\sqbrac{\jscript\paren{\iscript (\rho )}b|}=\trace\sqbrac{\iscript (\rho )\jscript ^*(b)}\\
   &=\trace\sqbrac{\rho\iscript ^*\paren{\jscript ^*(b)}}=\trace\sqbrac{\rho(\jscript ^*\circ\iscript ^*)(b)}
\end{align*}
for all $\rho\in\sscript (H_1)$ we conclude that $(\iscript\circ\jscript )^*=\jscript ^*\circ\iscript ^*$.

If $\iscript\in\oscript (H_1,H_2)$ is a channel and $b\in\escript (H_2)$, we write $(b\vert\iscript )=\iscript ^*(b)$ and call $(b\vert\iscript)$ the effect $b$
\textit{conditioned by} $\iscript$ \cite{gud120,gud24}. We also write $\iscript ^C=\iscript ^*\paren{\escript (H_2)}\subseteq\escript (H_1)$ the
\textit{conditioned set} of $\iscript$. If $\rho\in\sscript (H_1)$, the $\rho$-probability of $(b\vert\iscript )$ is
\begin{equation*}
P_\rho (b\vert\iscript )=\trace\sqbrac{\rho (b\vert\iscript )}=\trace\sqbrac{\rho\iscript ^*(b)}=\trace\sqbrac{\iscript (\rho )b}
\end{equation*}
which is the $\iscript (\rho )$-probability of $b$. Since $\iscript ^*\colon\escript (H_2)\to\escript (H_1)$ is a morphism, $\iscript ^C\subseteq\escript (H_1)$ is a subeffect algebra of $\escript (H_1)$. If $a_i\in\iscript ^C$, $\lambda _i\in\sqbrac{0,1}$, $i=1,2,\ldots ,n$, $\sum\limits _i\lambda _i=1$, then $a_i=(b_i\vert\iscript )$ for
$b_i\in\escript (H_2)$ and we have
\begin{align*}
\trace\sqbrac{\rho\iscript ^*\paren{\sum\lambda _ib_i}}&=\trace\sqbrac{\iscript (\rho )\sum\lambda _ib_i}=\sum\lambda _i\trace\sqbrac{\iscript (\rho )b_i}\\
   &=\sum\lambda _i\trace\sqbrac{\rho\iscript ^*(b_i)}=\trace\sqbrac{\rho\sum\lambda _i\iscript ^*(b_i)}
\end{align*}
for all $\rho\in\sscript (H_1)$. Hence,
\begin{equation*}
\sum\lambda _ia_i=\sum\lambda _i\iscript ^*(b_i)=\iscript ^*\paren{\sum\lambda _ib_i}
\end{equation*}
We conclude that $\iscript ^*$ is affine and $\sum\lambda _ia_i\in\iscript ^C$. Therefore, $\iscript ^C$ is a convex subeffect algebra of $\escript (H_1)$
\cite{gg01,gn01,hz12}. In general, $\iscript ^*$ is not an isomorphism. In the particular case of a unitary channel $\iscript (\rho )=U\rho U^*$, where $U\colon H_1\to H_2$ is a unitary operator, then $\iscript ^*$ is an isomorphism and $\iscript ^C=\escript (H_1)$.

If $B=\brac{B_x\colon x\in\Omega _B}\in\rmob (H_2)$ and $\iscript\in\oscript (H_1,H_2)$ is a channel, then $B$ \textit{conditioned by} $\iscript$ is the observable
$(B\vert\iscript )\in\rmob (H_1)$ given by
\begin{equation*}
(B\vert\iscript )_x=\iscript ^*(B_x)=(B_x\vert\iscript )
\end{equation*}
for all $x\in\Omega _B$ \cite{gud120,gud24}. We write $\iscript ^*(B)=(B\vert\iscript )$ and $A\in\iscript ^C$ if $A=(B\vert\iscript )$ for some
$B\in\rmob (H_2)$.

\begin{lem}    % Lemma 2.2
\label{lem22}
If $A\in\rmob (H_1)$, $\iscript\in\oscript (H_1,H_2)$ a channel, then $A\in\iscript ^C$ if and only if $A_x=(b_x\vert\iscript )$ for all $x\in\Omega _A$ where
$b_x\in\escript (H_2)$ and $\sum\limits _{x\in\Omega _A}b_x\le I_{H_2}$.
\end{lem}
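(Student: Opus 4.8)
My plan is to treat the two implications separately, with the backward direction carrying essentially all of the content. The forward direction I expect to be immediate from the definition of the conditioned set: if $A\in\iscript ^C$, then by definition $A=(B\vert\iscript )$ for some observable $B\in\rmob (H_2)$ with $\Omega _B=\Omega _A$, so that $A_x=\iscript ^*(B_x)=(B_x\vert\iscript )$ with each $B_x\in\escript (H_2)$. Taking $b_x=B_x$, the equality $\sum _{x\in\Omega _A}b_x=I_{H_2}$ certainly satisfies the required inequality $\sum _{x\in\Omega _A}b_x\le I_{H_2}$, so this direction needs no further work.

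For the backward direction, suppose $A_x=\iscript ^*(b_x)$ for all $x\in\Omega _A$, with $b_x\in\escript (H_2)$ and $\sum _x b_x\le I_{H_2}$. The issue is that $\brac{b_x}$ need not be an observable on $H_2$, since its sum may fall strictly short of $I_{H_2}$; my plan is to absorb this defect into a single effect without disturbing the values $A_x$. I would set $c=I_{H_2}-\sum _x b_x$ and first note $c\in\escript (H_2)$, which follows from $0\le\sum _x b_x\le I_{H_2}$. The crucial step, which I expect to be the main obstacle, is proving $\iscript ^*(c)=0$. For this I would invoke additivity of $\iscript ^*$ (Theorem \ref{thm21}(ii)) across the finite sum—legitimate because each partial sum of the $b_x$ is dominated by $\sum _x b_x\le I_{H_2}$—to obtain $\iscript ^*\paren{\sum _x b_x}=\sum _x\iscript ^*(b_x)=\sum _x A_x=I_{H_1}$, the last equality because $A$ is an observable. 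Since $\iscript$ is a channel, $\iscript ^*$ is a morphism (Theorem \ref{thm21}(iii)), so $\iscript ^*(I_{H_2})=I_{H_1}$; applying additivity to the decomposition $I_{H_2}=c+\sum _x b_x$ then gives $\iscript ^*(I_{H_2})=\iscript ^*(c)+\iscript ^*\paren{\sum _x b_x}$, whence $\iscript ^*(c)=I_{H_1}-I_{H_1}=0$.

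With $\iscript ^*(c)=0$ in hand, the completion is routine. I would fix any $x_0\in\Omega _A$, set $B_{x_0}=b_{x_0}+c$ and $B_x=b_x$ for $x\ne x_0$, and then verify that each $B_x\in\escript (H_2)$—for $B_{x_0}$ using $0\le b_{x_0}+c\le\sum _x b_x+c=I_{H_2}$—and that $\sum _x B_x=\sum _x b_x+c=I_{H_2}$, so that $B=\brac{B_x\colon x\in\Omega _A}\in\rmob (H_2)$. Finally $\iscript ^*(B_{x_0})=\iscript ^*(b_{x_0})+\iscript ^*(c)=A_{x_0}+0=A_{x_0}$ and $\iscript ^*(B_x)=A_x$ for $x\ne x_0$, so $A=(B\vert\iscript )$ and hence $A\in\iscript ^C$. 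The entire argument hinges on the observation that the channel property forces the defect $c$ into the kernel of $\iscript ^*$, which is precisely what permits enlarging $\brac{b_x}$ to a genuine observable on $H_2$ while preserving all of the conditioned effects.
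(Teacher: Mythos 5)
Your proof is correct and takes essentially the same approach as the paper's: both hinge on showing that the defect $c=I_{H_2}-\sum_{x}b_x$ is annihilated by $\iscript^*$ (using additivity and the fact that $\iscript^*$ is a morphism because $\iscript$ is a channel), and then enlarging $\brac{b_x}$ to a genuine observable $B$ with $\iscript^*(B_x)=A_x$. The only difference is cosmetic: the paper distributes the defect uniformly, setting $B_x=b_x+\tfrac{c}{n}$ for every outcome, while you concentrate all of $c$ on a single outcome $x_0$; both constructions work equally well.
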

\begin{proof}
If $A\in\iscript ^C$, then $A_x=(B_x\vert\iscript )$ for all $x\in\Omega _A$ with $B_x\in\escript (H_2)$ and $\sum\limits _{x\in\Omega _A}B_x=I_{H_2}$. Conversely, suppose $A_x=(b_x\vert\iscript )$ with $b_x\in\escript _{H_2}$ and $\sum\limits _{x\in\Omega _A}b_x\le I_{H_2}$. Let $C=I_{H_2}-\sum\limits _{x\in\Omega _A}b_x$. Then
$C\in\escript (H_2)$ and define $B_x=b_x+\tfrac{C}{n}$ where $n$ is the cardinality of $\Omega _A$. Then $B_x\ge 0$ and
\begin{equation*}
\sum B_x=\sum b_x+C=I_{H_2}
\end{equation*}
so $B=\brac{b_x\colon x\in\Omega _A}\in\rmob\paren{\escript (H_2)}$. We have that
\begin{equation*}
\iscript ^*(C)=\iscript ^*(I_{H_2})-\iscript ^*\paren{\sum b_x}=I_{H_1}-\sum\iscript ^*(b_x)=I_{H_1}-\sum A_x=0
\end{equation*}
Hence,
\begin{equation*}
\iscript ^*(B_x)=\iscript ^*(b_x)+\tfrac{1}{n}\iscript ^*(C)=\iscript ^*(b_x)=A_x
\end{equation*}
We conclude that $A=(B\vert\iscript )$ so $A\in\iscript ^C$.
\end{proof}

For $\rho\in\sscript (H_1)$, $B\in\rmob (H_2)$, the $\rho$-\textit{distribution} of $(B\vert\iscript )$ is
\begin{align*}
\Phi _\rho ^{(B\vert\iscript )}(\Delta )&=\sum _{x\in\Delta}\trace\sqbrac{\rho (B\vert\iscript )_x}=\sum _{x\in\Delta}\trace\sqbrac{\rho\iscript ^*(B_x)}\\
   &=\sum _{x\in\Delta}\trace\sqbrac{\iscript (\rho )B_x}=\Phi _{\iscript (\rho )}^B(\Delta )
\end{align*}
Hence, $\Phi _\rho ^{(B\vert\iscript )}=\Phi _{\iscript (\rho )}^B$ so the $\rho$-distribution of $(B\vert\iscript )$ is the $\iscript (\rho )$ distribution of $B$ for all
$\rho\in\sscript (H_1)$. If $\iscript\in\rmin (H_1,H_2)$, $B\in\rmob (H_2)$, then $B$ \textit{given} $\iscript$ is the bi-observable $(B\|\iscript )_{xy}(H_1)$ that satisfies
$(B\|\iscript )_{xy}=\iscript _x^*(B_y)$ for all $(x,y)\in\Omega _{(B\|\iscript )}=\Omega _\iscript\times\Omega _B$. We have that $(B\|\iscript )$ is indeed an observable because
\begin{equation*}
\sum _{x,y}(B\|\iscript )_{xy}=\sum _{x,y}\iscript _x^*(B_y)=\sum _x\iscript _x^*(I_{H_2})=\iscriptbar\,^*(I_{H_2})=I_{H_1}
\end{equation*}

\begin{lem}    % Lemma 2.3
\label{lem23}
If $B\in\rmob (H_2)$, $\iscript\in\rmin (H_1,H_2)$ and $\rho\in\sscript (H_1)$ then the following statements hold.
{\rm{(i)}}\enspace $(B\|\iscript )^1=\iscripthat$ and $(B\|\iscript )^2=(B\vert\iscriptbar\,)$.
{\rm{(ii)}}\enspace 
\begin{align*}
\Phi _\rho ^{(B\|\iscript )}(\Delta )&=\trace\sqbrac{\sum _{(x,y)\in\Delta}\iscript (\rho )B_y}
\intertext{and}
\Phi _\rho ^{(B\|\iscript )}(\Delta _1\times\Delta _2)&=\trace\sqbrac{\iscript _{\Delta _1}(\rho )}\Phi _{(\iscript _{\Delta _1}(\rho ))'}^B(\Delta _2)
\end{align*}
\end{lem}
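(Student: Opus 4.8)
The plan is to unwind each claim directly from the definition of the bi-observable $(B\|\iscript )_{xy}=\iscript _x^*(B_y)$ together with the dual-map identity $\trace\sqbrac{\rho\iscript _x^*(a)}=\trace\sqbrac{\iscript _x(\rho )a}$ of Theorem \ref{thm21}(i) and the additivity from Theorem \ref{thm21}(ii). No structural input is needed beyond these; the entire content is in routing the finite sums correctly.

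For part (i), I would compute the first marginal by summing over $y$ and invoking additivity of $\iscript _x^*$:
\begin{equation*}
(B\|\iscript )_x^1=\sum _{y\in\Omega _B}\iscript _x^*(B_y)=\iscript _x^*\paren{\sum _{y\in\Omega _B}B_y}=\iscript _x^*(I_{H_2}).
\end{equation*}
To identify this with $\iscripthat _x$, I would observe that $\trace\sqbrac{\iscript _x(\rho )}=\trace\sqbrac{\iscript _x(\rho )I_{H_2}}=\trace\sqbrac{\rho\iscript _x^*(I_{H_2})}$ for every $\rho$, and compare this with the defining property $\trace (\rho\iscripthat _x)=\trace\sqbrac{\iscript _x(\rho )}$ of the measured observable; since this holds for all $\rho$ it forces $\iscript _x^*(I_{H_2})=\iscripthat _x$. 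For the second marginal, summing over $x$ and using that duals add—so that $\sum _x\iscript _x^*=\iscriptbar\,^*$, which follows from the same dual-map identity applied to $\iscriptbar =\sum _x\iscript _x$—gives $(B\|\iscript )_y^2=\iscriptbar\,^*(B_y)=(B_y\vert\iscriptbar\,)$, which is precisely $(B\vert\iscriptbar\,)_y$.

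For the first formula of part (ii), I would expand the distribution by its definition and push the trace through the dual map term by term,
\begin{equation*}
\Phi _\rho ^{(B\|\iscript )}(\Delta )=\sum _{(x,y)\in\Delta}\trace\sqbrac{\rho\iscript _x^*(B_y)}=\sum _{(x,y)\in\Delta}\trace\sqbrac{\iscript _x(\rho )B_y},
\end{equation*}
which is the stated expression once the finite sum is moved inside the trace. For the product-set formula I would first sum over $x\in\Delta _1$ to replace $\sum _{x\in\Delta _1}\iscript _x(\rho )$ by $\iscript _{\Delta _1}(\rho )$, then factor out the normalizing constant to write $\iscript _{\Delta _1}(\rho )=\trace\sqbrac{\iscript _{\Delta _1}(\rho )}\paren{\iscript _{\Delta _1}(\rho )}'$ and recognize the remaining sum over $y\in\Delta _2$ as $\Phi _{(\iscript _{\Delta _1}(\rho ))'}^B(\Delta _2)$.

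The step needing the most care is the product-set formula, specifically the normalization: the updated state $\paren{\iscript _{\Delta _1}(\rho )}'$ is only defined when $\trace\sqbrac{\iscript _{\Delta _1}(\rho )}\ne 0$. I would dispose of the degenerate case separately, observing that $\iscript _{\Delta _1}(\rho )$ is a positive operator (a completely positive map applied to a state), so $\trace\sqbrac{\iscript _{\Delta _1}(\rho )}=0$ forces $\iscript _{\Delta _1}(\rho )=0$; then both sides of the identity vanish and the formula holds trivially. Everything outside this case is a routine rearrangement of finite sums and applications of linearity of the trace.
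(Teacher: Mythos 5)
Your proposal is correct and follows essentially the same route as the paper: compute the marginals by summing $\iscript _x^*(B_y)$ over the appropriate index and using additivity of the dual maps, then obtain the distribution formulas by pushing the trace through $\iscript _x^*$ and factoring out $\trace\sqbrac{\iscript _{\Delta _1}(\rho )}$. Your two additions—explicitly justifying $\iscript _x^*(I_{H_2})=\iscripthat _x$ from the defining property of the measured observable, and disposing of the degenerate case $\trace\sqbrac{\iscript _{\Delta _1}(\rho )}=0$ where the updated state is undefined—are points the paper silently glosses over, and they tighten rather than change the argument.
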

\begin{proof}
(i)\enspace For all $x\in\Omega _\iscript$ we obtain 
\begin{equation*}
(B\|\iscript )_x^1=\sum _{y\in\Omega _B}(B\|\iscript )_{xy}=\sum _{y\in\Omega _B}\iscript _x^*(B_y)=\iscript _x^*(I_{H_1})=\iscripthat _x
\end{equation*}
Hence, $(B\|\iscript )^1=\iscripthat$. For all $y\in\Omega _B$ we obtain 
\begin{equation*}
(B\|\iscript )_y^2=\sum _{x\in\Omega _\iscript}(B\|\iscript )_{xy}=\sum _{x\in\Omega _\iscript}\iscript _x^*(B_y)=\iscriptbar\,^*(B_y)=(B\vert\,\iscriptbar\,)_y
\end{equation*}
Hence, $(B\|\iscript )^2=(B\vert\,\iscriptbar\,)$.
(ii)\enspace For $\Delta\subseteq\Omega _{(B\|\iscript )}$, $\rho\in\sscript (H_1)$ we have
\begin{align*}
\Phi _C^{(B\|\iscript )}(\Delta )&=\trace\sqbrac{\rho\sum _{(xy)\in\Delta}(B\|\iscript )_{xy}}=\sum _{(x,y)\in\Delta}\trace\sqbrac{\rho (B\|\iscript )_{xy}}\\
  &=\sum _{(x,y)\in\Delta}\trace\sqbrac{\rho\iscript _x^*(B_y)}=\sum _{(x,y)\in\Delta}\trace\sqbrac{\iscript _x(\rho )B_y}\\
  &=\trace\sqbrac{\sum _{(x,y)\in\Delta}\iscript _x(\rho )B_y}
\end{align*}
If $\Delta =\Delta _1\times\Delta _2$, then the previous expression becomes
\begin{align*}
\Phi _\rho ^{(B\|\iscript )}(\Delta _1\times\Delta _2)&=\trace\sqbrac{\sum _{x\in\Delta _1}\iscript _x(\rho )\sum _{y\in\Delta _2}B_y}\\
   &=\trace\sqbrac{\iscript _{\Delta _1}(\rho )}\trace\sqbrac{\iscript _{\Delta _1}(\rho )'\sum _{y\in\Delta _2}B_y}\\
   &=\trace\sqbrac{\iscript _{\Delta _1}(\rho )}\Phi _{(\iscript _{\Delta _1}(\rho ))'}(\Delta _2)\qedhere
\end{align*}
\end{proof}

It follows from Lemma~\ref{lem23}(i) that $\iscripthat$ and $(B\vert\,\iscript\,)$ coexist with joint observable $(B\|\iscript )$ for all $B\in\rmob (H_2)$. The next shows that
$\iscript ^C$ is closed under post-processing and taking parts \cite{gud21,gud22}.

\begin{thm}    % Theorem 2.4
\label{thm24}
{\rm{(i)}}\enspace If $\iscript\in\oscript (H_1,H_2)$ is a channel and $A,C\in\rmob (H_2)$ with $C$ a post-processing of $A$, then $(C\vert\iscript )$ is a post-processing of
$(A\vert\iscript )$.
{\rm{(ii)}}\enspace $\iscript ^C$ is closed under post-processing.
{\rm{(iii)}}\enspace If $\iscript\in\oscript (H_1,H_2)$ is a channel and $C\in\rmob (H_2)$ then $f\sqbrac{(C\vert\iscript )}_y=\iscript ^*\sqbrac{f(C)_y}$ for all
$y\in\Omega _{f(C)}$.
{\rm{(iv)}}\enspace $\iscript ^C$ is closed under taking parts.
\end{thm}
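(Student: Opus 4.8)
The plan is to reduce all four parts to a single structural fact about the dual map: $\iscript^*$ is additive (Theorem~\ref{thm21}(ii)) and, from its Kraus form $\iscript^*(a)=\sum_iK_i^*aK_i$, it satisfies $\iscript^*(\lambda a)=\lambda\iscript^*(a)$ for $\lambda\in\sqbrac{0,1}$. Hence $\iscript^*$ commutes with any finite sub-convex combination of effects in $\escript(H_2)$. Once this is recorded, every part becomes the observation that conditioning by $\iscript$ transports a purely classical operation through unchanged, so I would state this homogeneity-and-additivity property first since it is the only analytic input.

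For (i), I would write the post-processing as $C_z=\sum_{x\in\Omega_A}\lambda_{xz}A_x$ with $\sum_z\lambda_{xz}=1$ and apply $\iscript^*$ to both sides, obtaining $(C\vert\iscript)_z=\sum_x\lambda_{xz}(A\vert\iscript)_x$ with the \emph{same} stochastic matrix $\lambda_{xz}$; this is exactly the statement that $(C\vert\iscript)$ is a post-processing of $(A\vert\iscript)$, and no positivity check is needed beyond the fact that the $(A\vert\iscript)_x$ already form an observable on $H_1$. For (iii), with $f$ the surjection defining the part, I would compute
\[
f\sqbrac{(C\vert\iscript)}_y=\sum_{f(x)=y}\iscript^*(C_x)=\iscript^*\paren{\sum_{f(x)=y}C_x}=\iscript^*\sqbrac{f(C)_y},
\]
which is a single use of additivity.

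For the closure statements (ii) and (iv), I would start from $A=(B\vert\iscript)\in\iscript^C$ with $B\in\rmob(H_2)$ and build the witnessing observable on $H_2$. In (ii), given a post-processing $C_z=\sum_x\lambda_{xz}A_x$, I would set $D_z=\sum_x\lambda_{xz}B_x$; row-stochasticity of $\lambda_{xz}$ gives $\sum_zD_z=\sum_xB_x=I_{H_2}$, so $D\in\rmob(H_2)$, and linearity yields $C=(D\vert\iscript)\in\iscript^C$. In (iv), given a part $C=f(A)$, I would invoke (iii) to write $C_y=\iscript^*\sqbrac{f(B)_y}$; since $f(B)$ is a part of $B$ and hence lies in $\rmob(H_2)$, this exhibits $C=(f(B)\vert\iscript)\in\iscript^C$.

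I do not expect a genuine obstacle here. The only point needing care is the tacit linearity of $\iscript^*$ on the sub-convex sets of effects that arise, which I would dispose of once at the outset from the Kraus representation; after that each part is a one-line manipulation, and (ii), (iv) differ from (i), (iii) only by the extra bookkeeping that the processed observable $D$ or the coarse-graining $f(B)$ remains a genuine observable on $H_2$.
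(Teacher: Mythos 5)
Your proposal is correct and follows essentially the same route as the paper's proof: apply $\iscript^*$ to the classical post-processing matrix or coarse-graining map and pull it through by additivity/linearity, then witness membership in $\iscript^C$ via the transported observable $D_z=\sum_x\lambda_{xz}B_x$ (for (ii)) or $f(B)$ (for (iv)). The only differences are cosmetic: you isolate the homogeneity property of $\iscript^*$ up front and explicitly verify $\sum_zD_z=I_{H_2}$, steps the paper uses tacitly.
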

\begin{proof}
(i)\enspace We have that $C_y=\sum _x\lambda _{xy}A_x$ where $\lambda _{xy}\in\sqbrac{0,1}$ with $\sum _y\lambda _{xy}=1$ for all $x\in\Omega _A$. We then obtain
\begin{equation*}
(C\vert\iscript )_y=\iscript ^*(C_y)=\iscript ^*\paren{\sum _x\lambda _{xy}A_x}=\sum _x\lambda _{xy}\iscript ^*(A_x)=\sum _x\lambda _{xy}(A\vert\iscript )_x
\end{equation*}
which is a post-processing of $(A\vert\iscript )$.
(ii)\enspace If $A\in\iscript ^C$, then $A_x=\iscript ^*(B_x)$ for $B\in\rmob (H_2)$ and a post-processing of $A$  is given by $C_y=\sum _x\lambda _{xy}A_x$. But then 
\begin{equation*}
C_y=\sum _x\lambda _{xy}\iscript ^*(B_x)=\iscript ^*\paren{\sum _x\lambda _{xy}B_x}
\end{equation*}
Since $\sum _x\lambda _{xy}B_x\in\rmob (\escript _2)$, we conclude that $C\in\iscript ^C$. Hence, $\iscript ^C$ is closed under post-processing.
(iii)\enspace We have for all $y\in\Omega _{f(C)}$ that
\begin{equation*}
f\sqbrac{(C\vert\iscript )}_y=\sum _{x\in f^{-1}(y)}(C\vert\iscript )_x=\sum _{x\in f^{-1}(y)}\iscript ^*(C_x)=\iscript ^*\paren{\sum _{x\in f^{-1}(y)}C_x}=\iscript ^*\sqbrac{f(C)_y}
\end{equation*}
(iv)\enspace Suppose $A\in\iscript ^C$ so there exists a $C\in\rmob (H_2)$ such that $A=(C\vert\iscript )$. Hence, 
\begin{equation*}
A_x=(C\vert\iscript )_x=\iscript ^*(C_x)
\end{equation*}
If $B=f(A)$ is a part of $A$, then letting $D=F(C)$ and applying (iii) gives
\begin{equation*}
B_y=f(A)_y=f\sqbrac{(C\vert\iscript )}_y=\iscript ^*\sqbrac{f(C)_y}=\iscript ^*(D_y)
\end{equation*}
Hence, $B\in\iscript ^C$ so $\iscript ^C$ is closed under taking parts.
\end{proof}

If $A_i\in\rmob (H)$ with $\Omega _{A_i}=\Omega$, $i=1,2,\ldots ,n$ and $\lambda _i\in\sqbrac{0,1}$ with $\sum\limits _i\lambda _i=1$, then we call
$B=\sum\lambda _iA_i\in\rmob (H)$ given by $B_x=\sum\lambda _iA_{ix}$ an \textit{affine combination} of $\brac{A_i\colon i=1,2,\ldots ,n}$. We now show that
$\iscript ^C$ is closed under affine combinations. Suppose that $\iscript\in\rmin (H_1,H_2)$ and $A_i\in\iscript ^*(C_i)$, $C_i\in\rmob (H_2)$. Letting
$B=\sum\lambda _iA_i\in\rmob (H_2)$ be an affine combination of $\brac{A_i, i=1,2,\ldots ,n}$ we obtain
\begin{equation*}
B_x=\sum _i\lambda _i\iscript ^*(C_{ix})=\iscript ^*\paren{\sum _i\lambda _iC_{ix}}=\iscript ^*\sqbrac{\paren{\sum _i\lambda _iC_i}_x}
\end{equation*}
Since $\sum\limits _i\lambda _iC_i\in\rmob (H_2)$, we conclude that $B\in\iscript ^C$ so $\iscript ^C$ is closed under affine combinations.

If $\iscript\in\oscript (H_1,H_2)$ is a channel and $\jscript\in\rmin (H_2,H_3)$, then $\jscript$ \textit{conditioned} by $\iscript$ is the instrument
$(\jscript\vert\iscript )\in\rmin (H_1,H_3)$ given by
\begin{equation*}
(\jscript\vert\iscript )_y(\rho )=\iscript\circ\jscript _y(\rho )=\jscript _y(\iscript (\rho ))
\end{equation*}
for all $\rho\in\sscript (H_1)$ \cite{gud120,gud24}. We have for all $a\in\escript (H_3)$ that
\begin{equation*}
(\jscript\vert\iscript )_y^*=(\iscript\circ\jscript _y)^*=\jscript _y^*\circ\iscript ^*
\end{equation*}
Moreover,
\begin{equation*}
(\jscript\vert\iscript )_y^\wedge =(\jscript\vert\iscript )_y^*(I_{H_3})=\jscript _y^*\circ\iscript ^*(I_{H_3})=\iscript ^*\paren{\jscript _y^*(I_{H_3})}=\iscript ^*(\jscripthat _y)
\end{equation*}
We conclude that $(\jscript\vert\iscript )^\wedge\in\iscript ^C$. If $\iscript\in\rmin (H_1,H_2)$, $\jscript\in\rmin (H_2,H_3)$, we also call
$(\jscript\vert\iscriptbar\,)\in\rmin (H_1,H_3)$ the instrument $\jscript$ \textit{conditioned by} $\iscript$.

If $\iscript\in\rmin (H_1,H_2)$, $\jscript\in\rmin (H_2,H_3)$, then $\jscript$ \textit{given} $\iscript$ is the bi-instrument $(\jscript\|\iscript )\in\rmin (H_1,H_3)$ defined by
\begin{equation*}
(\jscript\|\iscript )_{xy}(\rho )=(\iscript _x\circ\jscript _y)(\rho )=\jscript _y\paren{\iscript _x(\rho )}
\end{equation*}
for all $\rho\in\sscript (H_1)$. The marginals of $(\jscript\|\iscript )$ become
\begin{align*}
(\jscript\|\iscript )_x^1&=\sum _y(\jscript\|\iscript )_{xy}=\sum _y\iscript _x\circ\jscript _y=\iscript _x\circ\jscriptbar\\
(\jscript\|\iscript )_y^2&=\sum _x(\jscript\|\iscript )_{xy}=\sum _x\iscript _x\circ\jscript _y=\iscriptbar\circ\jscript _y=(\jscript\vert\iscriptbar\,)_y
\end{align*}
Hence, $\iscript _x\circ\jscriptbar$ and $(\jscript\vert\iscriptbar\,)_y$ coexist with joint bi-instrument $(\jscript\|\iscript )$. In a similar way, if $B\in\rmob (H_2)$, we define 
$B$ \textit{given} $\iscript$ to be the bi-observable $(B\|\iscript )\in\rmob (H_1)$ defined by $(B\|\iscript )_{xy}=\iscript _x^*(B_y)$. The marginals of $(B\|\iscript )$ become
\begin{align*}
(B\|\iscript )_x^1&=\sum _y(B\|\iscript )_{xy}=\sum _y\iscript _x^*(B_y)=\iscript _x^*(I_{H_2})=\iscripthat _x\\
(B\|\iscript )_y^2&=\sum _x(B\|\iscript )_{xy}=\sum _x\iscript _x^*(B_y)=\iscriptbar\,^*(B_y)=(B\vert\iscript )_y
\end{align*}
We conculde that $\iscripthat$ and $(B\vert\iscript )$ coexist with joint bi-observable $(B\|\iscript )$.

\begin{example}  % Example 1
A \textit{Holevo instrument} $\hscript ^{(A,\alpha )}\in\rmin (H_1,H_2)$ has the form \cite{hol82,hol98}
\begin{equation*}
\hscript _x^{(A,\alpha )}(\rho )=\trace (\rho A_x)\alpha _x
\end{equation*}
where $A\in\rmob (H_1)$, $\alpha =\brac{\alpha _x\colon x\in\Omega A}\subseteq\sscript (H_2)$. Then $\hscript _x^{(A,\alpha )*}(b)=\trace (\alpha _xb)A_x$ for all
$b\in\escript (H_2)$ and
\begin{equation*}
\hscripthat _x^{(A,\alpha )}=\hscript ^{(A,\alpha )*}(I_{H_2})=A_x
\end{equation*}
so $\hscripthat ^{(A,\alpha )}=A$. If $B\in\rmob (H_2)$, then
\begin{equation*}
\paren{B\|\hscript ^{(A,\alpha )}}_{xy}=\hscript _x^{(A,\alpha )*}(B_y)=\trace(\alpha _xB_y)A_x
\end{equation*}
The marginals become
\begin{align*}
\paren{B\|\hscript ^{(A,\alpha )}}^1&=\hscripthat\,^{(A,\alpha )}=A\\
\paren{B\|\jscript ^{(A,\alpha )}}^2&=(B\vert\hscript ^{(A,\alpha )})
\end{align*}
Hence, $A$ and $(B\vert\hscript ^{(A,\alpha )})$ coexist with joint bi-observable $(B\|\hscript ^{(A,\alpha )})$. Let $\hscript ^{(A,\alpha )}\in\rmin (H_1,H_2)$ and
$\hscript ^{(B,\beta )}\in\rmin (H_2,H_3)$. We then obtain
\begin{align*}
\paren{\hscript ^{(B,\beta )}\|\hscript ^{(A,\alpha )}}_{xy}(\rho )&=\hscript _y^{(B,\beta )}\paren{\hscript _x^{(A,\alpha )}(\rho )}
   =\hscript _y^{(B,\beta )}\sqbrac{\trace (\rho A_x)\alpha _x}\\
   &=\trace (\rho A_x)\hscript _y^{(B,\beta )}(\alpha _x)=\trace (\rho A_x)\trace (\alpha _xB_y)\beta _y\\
   &=\trace\sqbrac{\rho\trace (\alpha _xB_y)A_x}\beta _y=\trace\sqbrac{\rho C_{xy}}\delta _{xy}\\
   &=\hscript _{xy}^{(C,\delta )}(\rho )
\end{align*}
where $C_{xy}\in\rmob (H_1)$ is given by $C_{xy}=\trace (\alpha _xB_y)A_x$ and $\delta _{xy}=\beta _y$ for all $x\in\Omega _A$. We then have the marginals
\begin{align*}
\paren{\hscript ^{(B,\beta )}\|\hscript ^{(A,\alpha )}}_x^1(\rho )&=\sum _y\hscript _{xy}^{(C,\delta )}(\rho )=\trace (\rho A_x)
   =\trace (\rho A_x)\sum _y\trace (\alpha _xB_y)\beta _y\\
\intertext{and}
\paren{\hscript ^{(B,\beta )}\|\hscript ^{(A,\alpha )}}_y^2(\rho )&=\sqbrac{\sum _x\trace (\rho A_x)\trace (\alpha _xB_y)}\beta _y
\end{align*}
Moreover, we have
\begin{equation*}
\paren{\hscript ^{(B,\beta )}\vert\hscriptbar\,^{(A,\alpha )}}_y=\paren{\hscript ^{(B,\beta )}\|\hscript ^{(A,\alpha )}}_y^2
\end{equation*}
This completes the example.\hfill\qedsymbol
\end{example}

\section{Measurement Models}  % Section 3
A \textit{measurement model} is a 4-tuple $M=(H,K,\iscript ,P)$ where $H$ and $K$ are Hilbert space $\iscript\in\rmin (H,H\otimes K)$ and $P\in\rmob (K)$
\cite{gud223,hz12}. The motivation behind $M$ is the following. In order to obtain information about the \textit{base space} $H$ we employ an \textit{auxiliary space} $K$ and an \textit{interaction instrument} $\iscript\in\rmin(H,H\otimes K)$. After the interaction is performed the \textit{probe observable} $P$ is measured and the resulting outcome gives the desired information. The \textit{bi-instrument measured by} $M$ is $\jscript\in\rmin (H)$ defined as
\begin{equation*}
\jscript _{xy}(\rho )=\trace _K\sqbrac{\iscript _x(\rho )I_H\otimes P_y}
\end{equation*}
for all $\rho\in\sscript (H)$, $x\in\Omega _\iscript$, $y\in\Omega _P$ where $\trace _K$ is the partial trace with respect to $K$. The \textit{instrument measured by} $M$ is the marginal
\begin{equation*}
\jscript _y^2(\rho )=\trace _K\sqbrac{\iscriptbar (\rho )I_H\otimes P_y}
\end{equation*}
The other marginal $\jscript _x^1(\rho )=\trace _K\sqbrac{\iscript _x(\rho )}$ is independent of the probe observable $P$ and is called the instrument $\iscript$
\textit{reduced to} $H$. The \textit{bi-observable measured by} $M$ is $\jscripthat\in\rmob (H)$ defined as $\jscripthat _{xy}=\jscript _{xy}^*(I_{H\otimes K})$. We have
\begin{equation*}
\trace (\rho\jscripthat _{xy})=\trace\sqbrac{\rho\jscript _{xy}^*(I_{H\otimes K})}=\trace\sqbrac{\jscript _{xy}(\rho )}=\trace\sqbrac{\iscript _x(\rho )I_H\otimes P_y}
\end{equation*}
for all $\rho\in\sscript (H)$. The \textit{observable measured by} $M$ is the marginal $\jscripthat\,^2=\sum _x\jscript _{xy}^*(I_{H\otimes K})$ and for all $\rho\in\sscript (H)$ we have
\begin{equation*}
\trace (\rho\jscripthat _y^2)=\trace\sqbrac{\jscript _y^2(\rho )}=\trace\sqbrac{\,\jscriptbar (\rho )I_{H\otimes K}}
\end{equation*}
The other marginal is $\jscripthat _x^1=\sum\limits _y\jscript _{xy}^*(I_{H\otimes K})$ and for all $\rho\in\sscript (H)$ we obtain
\begin{equation*}
\trace (\rho\jscripthat _x^1)=\trace\sqbrac{\jscript _x^1(\rho )}=\trace\sqbrac{\iscript _x(\rho )}=\trace\sqbrac{\rho\iscript _x^*(I_{H\otimes K}}
\end{equation*}
Hence, $\jscripthat _x^1=\iscript _x^*(I_{H\otimes K})=\iscripthat _x$ which again is independent of $P$. Since
\begin{equation*}
\trace (\rho\jscripthat _y^2)=\trace\brac{\trace _K\sqbrac{\,\iscriptbar (\rho )I_{H\otimes K}}}=\trace\sqbrac{\jscript _y^2 (\rho )}=\trace (\rho\jscript _y^{2\wedge})
\end{equation*}
we conclude that $\jscripthat _y^2=\jscript _y^{2\wedge}$.

Following the discussion in the previous paragraph, the main quantities of interest are $\jscript _y^2$ and $\jscripthat _y^2$. Suppose $\iscriptbar$ has Kraus decomposition $\iscriptbar (\rho )=\sum K_i\rho K_i^*$ where $K_i\in\lscript (H,H\otimes K)$, $i=1,2,\ldots ,n$ satisfies $\sum K_i^*K_i=I_H$. Then
\begin{align*}
\jscript _y^2(\rho )&=\trace _K\paren{\sum _iK_i\rho K_i^*(I_H\otimes P_y)}=\sum _i\trace _K(K_i\rho K_i^*I_H\otimes P_y)
\intertext{and}
\trace (\rho\jscripthat _y^2)&=\trace\sqbrac{\jscript _y^2(\rho )}=\sum _i\trace (K_i\rho K_i^*I_H\otimes P_y)=\sum _i\trace (\rho K_i^*I_H\otimes P_yK_i)\\
   &=\trace\paren{\rho\sum _iK_i^*I_H\otimes P_yK_i}
\end{align*}
Hence,
\begin{equation*}
\jscripthat _y^2=\jscript _y^{2\wedge}=\sum _iK_i^*I_H\otimes P_yK_i
\end{equation*}

In general, $\jscript ^2$ and $\jscripthat ^2$ are difficult to calculate, so we introduce a simplification called separability. \cite{gud223,hz12}. We say that a channel
$\iscript\colon\sscript (H)\to\sscript (H\otimes K)$ is \textit{Kraus separable} if there exist operators $K_i\in\lscript (H)$ and states $\rho _i\in\sscript (H)$, $i=1,2,\ldots ,n$, with $\sum\limits K_i^*K_i= I_H$ such that $\iscript (\rho )=\sum _i(K_i\rho K_i^*\otimes\rho _i)$ for all $\rho\in\sscript (H)$.

\begin{thm}    % Theorem 3.1
\label{thm31}
Let $M=(H,K,\iscript ,P)$ be a measurement model with $\iscriptbar$ Kraus separable.
{\rm{(i)}}\enspace For $a\in\escript (H)$, $b\in\escript (K)$ we have $\iscriptbar\,^*(a\otimes b)=\sum\limits _i\trace (\rho _ib)K_i^*aK_i$.
{\rm{(ii)}}\enspace For all $y\in\Omega _P$, $\rho\in\sscript (H)$ we have $\jscript _y^2(\rho )=\sum\limits _i\trace (\rho _iP_y)K_i\rho K_i^*$.
{\rm{(iii)}}\enspace For all $y\in\Omega _P$ we obtain
\begin{equation*}
\jscripthat _y^2=\jscript _y^{2\wedge}=\sum _i\trace (\rho _iP_y)K_i^*K_i
\end{equation*}
which is a post-processing of the observable $\brac{K_i^*K_i:i=1,2,\ldots ,n}$.
\end{thm}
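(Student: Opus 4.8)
The plan is to reduce all three parts to the defining relation of the dual map in Theorem~\ref{thm21} combined with two elementary facts about tensor products: the trace factorizes as $\trace(X\otimes Y)=\trace(X)\trace(Y)$, and the partial trace obeys $\trace _K(X\otimes Y)=X\,\trace(Y)$ for $X\in\lscript(H)$, $Y\in\lscript(K)$. In each case I would substitute the Kraus-separable form $\iscriptbar(\rho)=\sum _i K_i\rho K_i^*\otimes\rho _i$ and simplify, invoking the uniqueness clause of Theorem~\ref{thm21}(i) to read off the resulting effect.

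For (i), I would begin with $\trace\sqbrac{\rho\,\iscriptbar\,^*(a\otimes b)}=\trace\sqbrac{\iscriptbar(\rho)(a\otimes b)}$, which holds for every $\rho\in\sscript(H)$. Inserting the separable form and factoring the trace across the tensor product gives $\sum _i\trace(K_i\rho K_i^*a)\trace(\rho _ib)$; cyclicity of the trace rewrites $\trace(K_i\rho K_i^*a)$ as $\trace(\rho\,K_i^*aK_i)$, so the expression equals $\trace\sqbrac{\rho\sum _i\trace(\rho _ib)K_i^*aK_i}$. Since this holds for all $\rho$, uniqueness of the dual map identifies $\iscriptbar\,^*(a\otimes b)$ with $\sum _i\trace(\rho _ib)K_i^*aK_i$.

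For (ii), I would substitute the separable form into the defining formula $\jscript _y^2(\rho)=\trace _K\sqbrac{\iscriptbar(\rho)\,I_H\otimes P_y}$. Each summand becomes $K_i\rho K_i^*\otimes\rho _iP_y$, and applying $\trace _K(X\otimes Y)=X\,\trace(Y)$ term by term yields $\jscript _y^2(\rho)=\sum _i\trace(\rho _iP_y)K_i\rho K_i^*$. Part~(iii) then follows quickly: the identity $\jscripthat _y^2=\jscript _y^{2\wedge}$ was already established for any measurement model, so it remains to evaluate this effect through $\trace(\rho\,\jscript _y^{2\wedge})=\trace\sqbrac{\jscript _y^2(\rho)}$. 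Taking the full trace of the formula from (ii) and using cyclicity once more produces $\trace\sqbrac{\rho\sum _i\trace(\rho _iP_y)K_i^*K_i}$, and uniqueness gives $\jscript _y^{2\wedge}=\sum _i\trace(\rho _iP_y)K_i^*K_i$.

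I do not expect a serious obstacle; the only step needing genuine verification is the concluding claim that this is a post-processing. I would take $A=\brac{K_i^*K_i\colon i=1,2,\ldots ,n}$, which is a legitimate observable since each $K_i^*K_i\ge 0$ and $\sum _iK_i^*K_i=I_H$, and set $\lambda _{iy}=\trace(\rho _iP_y)$. Each $\lambda _{iy}$ lies in $\sqbrac{0,1}$ because $\rho _i$ is a state and $P_y$ an effect, and $\sum _y\lambda _{iy}=\trace\paren{\rho _i\sum _yP_y}=\trace(\rho _iI_K)=1$ because $P$ is an observable on $K$. These are exactly the defining conditions for $\jscripthat _y^2=\sum _i\lambda _{iy}(K_i^*K_i)$ to be a post-processing of $A$, which finishes (iii).
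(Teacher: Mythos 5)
Your proposal is correct and follows essentially the same route as the paper: substitute the Kraus-separable form of $\iscriptbar$, factor traces across tensor products, use cyclicity, and identify the resulting effects via the uniqueness of the dual map. The only cosmetic difference is in (iii), where you take the full trace of the formula in (ii) while the paper instead applies (i) with $a=I_H$, $b=P_y$; both arguments conclude with the same post-processing verification $\sum_y\trace (\rho _iP_y)=1$, which you in fact check slightly more carefully than the paper does.
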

\begin{proof}
(i)\enspace For all $\rho\in\sscript (H)$ we obtain
\begin{align*}
\trace\sqbrac{\rho\iscriptbar ^*(a\otimes b)}&=\trace\sqbrac{\,\iscriptbar (\rho )a\otimes b}=\trace\sqbrac{\sum _i(K_i\rho K_i^*\otimes\rho _i)a\otimes b}\\
   &=\sum _i\trace (K_i\rho K_i^*a\otimes\rho _ib)=\sum _i\trace (K_i\rho K_i^*a)\trace (\rho _ib)\\
   &=\sum _i\trace (\rho _ib)\trace (\rho K_i^*aK_i)=\trace\sqbrac{\rho\sum _i\trace (\rho _ib)K_i^*aK_i}
\end{align*}
Hence, $\iscriptbar ^*(a\otimes b)=\sum\limits _i\trace (\rho _ib)K_i^*aK_i$.\newline
(ii)\enspace We have for all $y\in\Omega _P$, $\rho\in\sscript (H)$ that
\begin{align*}
\jscript _y^2(\rho )&=\trace _K\sqbrac{\,\iscriptbar (\rho )I_H\otimes P_y}=\trace _K\sqbrac{\sum _i(K_i\rho K_i^*\otimes\rho _i)(I_H\otimes P_y)}\\
   &=\sum _i\trace _K(K_i\rho K_i^*\otimes\rho _iP_y)=\sum _i\trace (\rho _iP_y)K_i\rho K_i^*
\end{align*}
(iii) Applying (i) we obtain
\begin{equation*}
\trace (\rho\jscripthat _y^2)=\trace\sqbrac{\,\iscriptbar (\rho )I_H\otimes P_y}=\trace\sqbrac{\rho\iscriptbar\,^*(I_H\otimes P_y)}
   =\trace\sqbrac{\rho\sum _i\trace (\rho _iP_y)K_i^*K_i}
\end{equation*}
Hence, $\jscripthat _y^2=\sum\limits _i\trace (\rho _iP_y)K_i^*K_i$ which is a post-processing of the observable $\brac{K_i^*K_i\colon i=1,2,\ldots ,n}$ because
$\sum\limits _y\trace (\rho _iP_y)=\trace (\rho _i)=1$, $i=1,2,\ldots ,n$.
\end{proof}

A simple example of a Kraus separable channel has the form $\iscript (\rho )=\sum K_i\rho K_i^*$ where $K_i\in\lscript (H,H\otimes K)$ is given by
$K_i\phi = A_i\phi\otimes\psi _i$, $A_i\in\lscript (H)$, $\psi _i\in K$, $i=1,2,\ldots ,n$, with $\doubleab{\psi _i}=1$, $\sum A_i^*A_i=I_H$. The adjoint of $K_i$ satisfies
\begin{align*}
\elbows{\phi ,K_i^*(\phi _1\otimes\phi _2)}&=\elbows{K_i\phi ,\phi _1\otimes\phi _2}=\elbows{A_i\phi\otimes\psi _i,\phi _1\otimes\phi _2}\\
   &=\elbows{A_i\phi ,\phi _1}\elbows{\psi _i,\phi _2}=\elbows{\phi ,\elbows{\psi _i,\phi _2}A_i^*\phi _1}
\end{align*}
Hence, $K_i^*(\phi _1\otimes\phi _2)=\elbows{\psi _i,\phi _2}A_i^*\phi _1$ for all $\phi _1\in H$, $\phi _2\in K$, $i=1,2,\ldots ,n$. We conclude that for all
$\phi _1\otimes\phi _2\in H\otimes K$ we obtain
\begin{align*}
\iscript (\rho )(\phi _1\otimes\phi _2)&=\sum _iK_i\rho K_i^*(\phi _1\otimes\phi _2)=\sum _iK_i\rho\elbows{\psi _i,\phi _2}A_i^*\phi _1\\
   &=\sum _i\elbows{\psi _i,\phi _2}K_i\rho A_i^*\phi _1=\sum _i\elbows{\psi _i,\phi _2}A_i\rho A_i^*\phi _1\otimes\psi _i\\
   &=\sum _i\sqbrac{A_i\rho A_i^*\phi _1\otimes\elbows{\psi _i,\phi _2}\psi _i}=\sum _i\sqbrac{A_i\rho A_i^*\otimes\ket{\psi _i}\bra{\psi _i}}\phi _1\otimes\phi _2
\end{align*}
Thus, $\iscript (\rho )=\sum\limits _i\paren{A_i\rho A_i^*\otimes\ket{\psi _i}\bra{\psi _i}}$ where $\ket{\psi _i}\bra{\psi _i}$ are pure states. We conclude that $\iscript$ is Kraus separable with corresponding observable $\brac{A_i\colon i=1,2,\ldots ,n}$ and pure states $\rho _i=\ket{\psi _i}\bra{\psi _i}$.

Let $\iscript =\hscript ^{(A,\alpha )}\in\rmin (H,H\otimes K)$ be a Holevo instrument model $M=(H,K,\iscript ,P)$. The bi-instrument measured by $M$ becomes
\begin{equation*}
\jscript _{xy}(\rho )=\trace _K\sqbrac{\iscript _x(\rho )I_H\otimes P_y}=\trace _K\sqbrac{\trace (\rho A_x)\alpha _xI_H\otimes P_y}
    =\trace (\rho A_x)\trace _K\sqbrac{\alpha _xI_H\otimes P_y}
\end{equation*}
The instrument measured by $M$ satisfies
\begin{equation*}
\jscript _y^2(\rho )=\sum _x\jscript _{xy}(\rho )=\sum _x\trace (\rho A_x)\trace _K(\alpha _xI_H\otimes P_y)
\end{equation*}
The other marginal becomes
\begin{equation*}
\jscript _x^1(\rho )=\trace _K\sqbrac{\iscript _x(\rho )}=\trace _K\sqbrac{\trace (\rho A_x)\alpha _x}=\trace (\rho A_x)\trace _K(\alpha _x)
\end{equation*}
Since
\begin{equation*}
\trace\sqbrac{\rho\jscript _x^{1*}(I_H)}=\trace\sqbrac{\jscript _x^1(\rho )}=\trace (\rho A_x)
\end{equation*}
we have $\jscript _x^{1\wedge}=\jscript _x^{1*}(I_H)=A_x$. Since $\jscript ^{2\wedge}$ is complicated to calculate, we again introduce a separability condition. We say that
$\hscript ^{(A,\alpha )}$ is \textit{Holevo separable} if $\alpha _x=\beta _x\otimes\gamma _x$ where $\beta _x\in\sscript (H)$, $\gamma _x\in\sscript (K)$. The next result is similar to Theorem~\ref{thm31}.

\begin{thm}    % Theorem 3.2
\label{thm32}
Suppose $\iscript =\hscript ^{(A,\alpha )}$ is Holevo separable.
{\rm{(i)}}\enspace For all $a\in\escript (H\otimes K)$ we have $\iscript _x^*(a)=\trace (\beta _x\otimes\gamma _xa)A_x$.
{\rm{(ii)}}\enspace $\jscript _{xy}(\rho )=\trace (\rho A_x)\trace (\gamma _xP_y)\beta _x$.
{\rm{(iii)}}\enspace $\jscript _y^2(\rho )=\sum _x\trace (\rho A_x)\trace (\gamma _xP_y)\beta _x$.
{\rm{(iv)}}\enspace $\jscript _x^1=\hscript _x^{(A,\beta )}$.
{\rm{(v)}}\enspace $\jscripthat _{xy}=\trace (\gamma _xP_y)A_x$.
{\rm{(vi)}}\enspace $\jscripthat _y^2=\jscript _y^{2\wedge}=\sum _x\trace (\gamma _xP_x)A_x$ which is a post-processing of $A$.
\end{thm}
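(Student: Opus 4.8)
The plan is to mirror the structure of Theorem~\ref{thm31}, deriving every part from a single source identity---the Holevo dual formula $\hscript_x^{(A,\alpha)*}(b)=\trace(\alpha_x b)A_x$ recalled just before the theorem---together with the elementary partial-trace identity $\trace_K(X\otimes Y)=\trace(Y)X$ for $X\in\lscript(H)$, $Y\in\lscript(K)$, and the normalizations $\trace(\beta_x)=\trace(\gamma_x)=1$ that hold because $\beta_x,\gamma_x$ are states. For part (i) I would simply substitute $\alpha_x=\beta_x\otimes\gamma_x$ into the Holevo dual formula, obtaining $\iscript_x^*(a)=\trace\bigl((\beta_x\otimes\gamma_x)a\bigr)A_x$ for every $a\in\escript(H\otimes K)$; this is the key identity on which the remaining parts rest.

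For part (ii) I would start from the defining formula $\jscript_{xy}(\rho)=\trace_K[\iscript_x(\rho)\,I_H\otimes P_y]$, insert $\iscript_x(\rho)=\trace(\rho A_x)\,\beta_x\otimes\gamma_x$, pull the scalar $\trace(\rho A_x)$ out of the partial trace, and use $(\beta_x\otimes\gamma_x)(I_H\otimes P_y)=\beta_x\otimes\gamma_x P_y$ so that the partial-trace identity collapses it to $\trace(\gamma_x P_y)\beta_x$. Part (iii) is then just the marginal sum $\jscript_y^2=\sum_x\jscript_{xy}$ applied to (ii), and part (iv) repeats the computation of (ii) without the probe factor, using $\trace(\gamma_x)=1$ to recognize $\jscript_x^1(\rho)=\trace(\rho A_x)\beta_x=\hscript_x^{(A,\beta)}(\rho)$.

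For parts (v) and (vi) I would pass to the measured observable through its defining relation $\trace(\rho\,\jscripthat_{xy})=\trace[\jscript_{xy}(\rho)]$: taking the trace of the expression in (ii) and invoking $\trace(\beta_x)=1$ gives $\trace(\rho\,\jscripthat_{xy})=\trace(\rho A_x)\trace(\gamma_x P_y)$, hence $\jscripthat_{xy}=\trace(\gamma_x P_y)A_x$. Summing over $x$ yields $\jscripthat_y^2=\sum_x\trace(\gamma_x P_y)A_x$, and the identification $\jscripthat_y^2=\jscript_y^{2\wedge}$ was already established in the general measurement-model discussion preceding Theorem~\ref{thm31}. To see this is a post-processing of $A$, I would set $\lambda_{xy}=\trace(\gamma_x P_y)$; these lie in $[0,1]$ since $\gamma_x$ is a state and $P_y$ an effect, and $\sum_y\lambda_{xy}=\trace(\gamma_x\sum_y P_y)=\trace(\gamma_x I_K)=1$, which is exactly the post-processing condition.

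None of the steps presents a genuine obstacle, since the theorem is the Holevo-separable analogue of Theorem~\ref{thm31}; the only point requiring care is the consistent application of the partial-trace identity and the repeated appeal to the normalizations $\trace(\beta_x)=\trace(\gamma_x)=1$, which are precisely what make the two tensor factors separate cleanly and allow each trace over $H\otimes K$ to factor into a trace over $H$ times a trace over $K$.
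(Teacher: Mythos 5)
Your proposal is correct and follows essentially the same route as the paper: every part is reduced to the Holevo dual formula with $\alpha_x=\beta_x\otimes\gamma_x$, the partial-trace identity $\trace_K(X\otimes Y)=\trace(Y)X$, and the normalizations $\trace(\beta_x)=\trace(\gamma_x)=1$, with (v)--(vi) obtained through the duality relation $\trace(\rho\,\jscripthat_{xy})=\trace[\jscript_{xy}(\rho)]$ and the observation $\sum_y\trace(\gamma_xP_y)=1$ for the post-processing claim. The only cosmetic differences are that the paper re-derives the dual formula inline for (i) rather than citing it from the earlier example, and computes (iii) via $\iscriptbar$ rather than by summing (ii) over $x$; these are trivially equivalent.
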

\begin{proof}
(i)\enspace For all $\rho\in\sscript (H)$ we have
\begin{align*}
\trace\sqbrac{\rho\iscript _x^*(a)}&=\trace\sqbrac{\iscript _x(\rho )a}=\trace\sqbrac{\hscript _x^{(A,\alpha )}(\rho )a}=\trace\sqbrac{\trace (\rho A_x)\alpha _xa}\\
   &=\trace (\rho A_x)\trace (\alpha _xa)=\trace (\rho A_x)\trace (\beta _x\otimes\gamma _xa)=\trace\sqbrac{\rho\trace (\beta _x\otimes\gamma _xa)A_x}
\end{align*}
Hence, $\iscript _x^*(a)=\trace (\beta _x\otimes\gamma _xa)A_x$.
(ii)\enspace We have that
\begin{align*}
\jscript _{xy}(\rho )&=\trace _K\sqbrac{\iscript _x(\rho )I_H\otimes P_y}=\trace _K\sqbrac{\hscript _x^{(A,\alpha )}(\rho )I_H\otimes P_y}
   =\trace _K\sqbrac{\trace (\rho A_x)\alpha _xI_H\otimes P_y}\\
   &=\trace _K\sqbrac{\trace (\rho A_x)\beta _x\otimes\gamma _xI_H\otimes P_y}=\trace (\rho A_x)\trace _K(\beta _x\otimes\gamma _xP_y)\\
   &=\trace (\rho A_x)\trace (\gamma _xP_y)\beta _x
\end{align*}
(iii)\enspace Since $\iscriptbar (\rho )=\hscriptbar\,^{(A,\alpha )}(\rho )=\sum _x\trace (\rho A_x)\alpha _x-\sum _x\trace (\rho A_x)\beta _x\otimes\gamma _x$ we have
\begin{align*}
\jscript _y^2(\rho )&=\trace _K\sqbrac{\,\iscriptbar (\rho )I_H\otimes P_y}=\trace _K\sqbrac{\sum _x\trace (\rho A_y)\beta _x\otimes\gamma _xI_H\otimes P_y}\\
   &=\sum _x\trace (\rho A_x)\trace _K(\beta _x\otimes\gamma _xP_y)=\sum _x\trace (\rho A_x)\trace (\gamma _xP_y)\beta _x
\end{align*}
(iv)\enspace For all $\rho\in\sscript (H)$ we obtain
\begin{align*}
\jscript _x^1(\rho )&=\trace _K\sqbrac{\iscript _x(\rho )}=\trace _K\sqbrac{\hscript _X^{(A,\alpha )}(\rho )}=\trace _K\sqbrac{\trace (A_x)\alpha _x}\\
   &=\trace (\rho A_x)\trace _K(\beta _x\otimes\gamma _x)=\trace (\rho A_x)\beta _x=\hscript _x^{(A,\beta )}(\rho )
\end{align*}
Hence, $\jscript _x^1=\hscript _x^{(A,\beta )}$.
(v)\enspace By (ii) we have
\begin{equation*}
\trace\sqbrac{\rho\jscript _{xy}^*(I_{H\otimes K})}=\trace\sqbrac{\jscript _{xy}(\rho )}=\trace (\rho A_x)\trace (\gamma _xP_y)
   =\trace\sqbrac{\rho\trace (\gamma _xP_y)A_x}
\end{equation*}
Therefore, $\jscript _{xy}^*(I_{H\otimes K})=\trace (\gamma _xP_y)A_x$ so we obtain
\begin{equation*}
\jscripthat _{xy}=\jscript _{xy}^*(I_{H\otimes K})=\trace (\gamma _xP_y)A_x
\end{equation*}
(vi)\enspace Applying (v) gives
\begin{equation*}
\jscripthat _y^2=\jscript _y^{2\wedge}=\sum _x\jscript _{xy}^*(I_{H\otimes K})=\sum _x\trace (\gamma _xP_y)A_x
\end{equation*}
This is a post-processing of $A$ because $\sum _y\trace (\gamma _xP_y)=1$ for all $x\in\Omega _A$.
\end{proof}

\end{document}